\newcommand{\bs}{\boldsymbol}
\newtheorem{theorem}{Theorem}
\newtheorem{proposition}{Proposition}
\begin{document}

\centerline{\Large\bf Bayesian model selection and averaging via mixture model estimation}

\vspace{2pt}

\vspace{.4cm}
\centerline{Merlin Keller$^1$ and Kaniav Kamary$^{1,2}$}
\vspace{.4cm}
\centerline{\it$^1$ EDF R\&D, Chatou, \it$^2$ AgroParisTech, Paris}

\vspace{.55cm}
\fontsize{9}{11.5pt plus.8pt minus .6pt}\selectfont

\begin{abstract}

A new approach for Bayesian model averaging (BMA) and selection is proposed, based on the mixture model approach for hypothesis testing in \cite{Kamari2014}. Inheriting from the good properties of this approach, it extends BMA to cases where improper priors are chosen for parameters that are common to all candidate models.

From an algorithmic point of view, our approach consists in sampling from the posterior distribution of the single-datum mixture of all candidate models, weighted by their prior probabilities. We show that this posterior distribution is equal to the `Bayesian-model averaged' posterior distribution over all candidate models, weighted by their posterior probability. From this BMA posterior sample, a simple Monte-Carlo estimate of each model's posterior probability is derived, as well as importance sampling estimates for expectations under each model's posterior distribution. 


\end{abstract}

\section{Introduction}

From a Bayesian point of view, hypothesis testing and, more generally, model selection and averaging \cite{Hoeting99,Wasserman00b}, usually involve the calculation of Bayes factors \cite{Kass95}, a notoriously difficult task, which has generated much literature over the last decades. Most of the methods which have been proposed to date rely on evaluating the marginal likelihood of each considered model, a quantity well defined only in presence of a proper prior, whose choice can be an issue when only weak prior information is available. 

Many marginal likelihood calculation methods have been proposed, including asymptotic approximations such as: the Bayesian information criterion (BIC) \cite{Schwarz1978} or  Laplace's approximation, importance sampling (all of which are reviewed in \cite{Kass95}); particle filtering \cite{Chopin2002}, exploiting the output of posterior sampling algorithms, through the harmonic mean identity \cite{Raftery2007}, the basic marginal equality \cite{Chib95,Chib2001}, or the divergence information criterion \cite{Spiegelhalter2002}, to cite just a few. Meanwhile, alternatives to the Bayes factor have been proposed to accomodate for improper priors, such as the {\em fractional} \cite{OHagan95} or {\em intrinsinc} Bayes factors \cite{Berger96}, which both divide the dataset into two parts; the first is used to update the improper prior into a proper posterior, which is then used as a prior for model selection, using the second part of the data.

Other methods avoid marginal likelihood evaluation, such as bridge sampling \cite{Meng1996}, which estimates the ratio of marginal likelihoods directly, or reversible-jump Monte-Carlo Markov Chain (MCMC) \cite{Green95}, which samples jointly the posterior distribution of the model indicator and model specific parameters, allowing for so-called trans-dimensional `jumps' between models. Alternatively, \cite{Carlin95} proposed to use a more classical MCMC approach to sample from the joint posterior distribution of the full vector parameter (obtained by concatenating the parameters from all candidate models), augmented by a discrete variable indicating the data generating model. See also \cite{Han01} for a review of other MCMC-based methods for Bayesian model choice. More recently, a solution based on Approximate Bayesian Computation (ABC) has been proposed in \cite{Grelaud09}, for models in which the likelihood function is not available. Similarly to above approaches, this consists in deriving the joint posterior distribution of model-specific parameters and model indicator, from which posterior model probabilities are then derived. However, this approach has been found unreliable in many cases \cite{Robert2011}, due to the intermodel insufficiency of the summary statistics used in the medtho. A solution has been found in \cite{Pudlo16}, recasting model choice as a classification problem, addressed by random forest techniques.

Going back to likelihood-based inference, a new and refreshing perspective on model choice can be found in the seminal work of \cite{Kamari2014}, which reveals its deep connections to mixture modeling in the following way: the mixture of all candidate models is considered, and its posterior distribution given the data derived. Model selection is then driven by the posterior mean, or median (or any other central quantity) of mixture weights. Recent theoretical results in \cite{Rousseau2011}, show that consistent model selection can be achieved in this fashion.

This solution has several key advantages over classical Bayesian model selection. In particular, it naturally allows the use of improper priors for parameters common to all considered models, provided the posterior mixture distribution is proper. From an algorithmic point of view, the problematic Bayes factor computation is replaced by mixture model estimation, which is arguably easier to perform, and in a more generic way, using MCMC samplers \cite{Marin2005}. See also \cite{Robert2015} for a detailed discussion on the difficulties associated with Bayes factors, and how these can be solved by mixture modeling.

In this paper, we use a simple argument to show that classical Bayesian model averaging can be formulated as a special case of mixture modeling, by considering the mixture of all candidate models, weighted by their prior probabilities. Using this reformulation, we are then able to benefit from all the advantages of the mixture modeling viewpoint, while remaining in the classical Bayesian model selection setting. This leads to a simple generic MCMC algorithm to perform Bayesian model averaging, from which posterior model probabilities and posterior samples within each candidate models are easily derived. Furthermore, it allows to share parameters accross models, which in turns naturally enables the use of improper priors.


The rest of this paper is organized as follows. In Section~\ref{sec:mixture}, we review classical Bayesian model selection and averaging, together with the mixture modeling approach to hypothesis testing in \cite{Kamari2014}. In Section~\ref{sec:BMA}, we show that Bayesian model averaging, hence also Bayesian model selection, is equivalent to mixture modeling in the single-datum setting. Based on this result, we introduce a new algorithm for Bayesian model selection and averaging, which consists in sampling the BMA posterior using a single MCMC procedure. From this, simple Monte-Carlo estimates of the posterior probabilities for each considered model can then be deduced, together with importance sampling reconstructions of the model-specific posteriors. The workings of our algorithm is illustrated in Section~\ref{sec:Illustration}.
 We conclude by a brief discussion in Section~\ref{sec:discussion}.

\section{Bayesian Model selection and averaging vs. Mixture modeling}\label{sec:mixture}

Consider the task of choosing a statistical model $\mathfrak M$ for a dataset $y$ among a finite list of candidate parametric models $\{\mathfrak M_k\}_{1\leq i\leq N}$, described by their likelihood function:
\begin{eqnarray*}
\mathfrak M_k &=& \{f_k(y | \theta), \theta \in \Theta\},
\end{eqnarray*}
Here we denote, without loss of generality, $\theta \in \mathbb R^p$ the set of all parameters. For instance, $\theta$ can be a single parameter, common to all models, or the concatenation $\theta = (\theta_1, \ldots, \theta_N)$ of model-specific parameters, with $f_k(y | \theta)=f_k(y | \theta_k)$ for $k=1,\ldots,N$. But any other configuration is conceivable. Adopting a Bayesian point of view, a prior density $\pi(\theta)$ is defined for $\theta$.

\subsection{Bayesian model selection} 

Classical Bayesian model selection then consists in considering the unknown model $\mathfrak M$ itself as a parameter to be estimated. Hence a prior distribution is defined on the set $\{\mathfrak M_k\}_{1\leq i\leq N}$ of candidate models:
\begin{eqnarray*}
\pi(\mathfrak M_k) &=& p_k,
\end{eqnarray*}
where $p_k$ is the prior probability that the data generating model $\mathfrak M$ is equal to $\mathfrak M_k$.

Applying Bayes' theorem, the posterior probability of model $\mathfrak M_k$ is then given by:
\begin{eqnarray}\label{eq:posterior_probabilities}
\pi(\mathfrak M_k | y) &=& \frac{p_k m_k(y)}{\sum_j p_j m_j(y)},
\end{eqnarray}
where 
\begin{eqnarray}\label{eq:marginal_likelihood}
m_k(y) &=& \int f_k(y|\theta) \pi(\theta) d\theta
\end{eqnarray}
is the {\em marginal likelihood}, or {\em evidence}, for model $\mathfrak M_k.$ From (\ref{eq:posterior_probabilities}), note that the actual value of $m_k(y)$ for each model $k$ need only be known up to a common multiplicative constant, since the posterior probabilities are defined in terms of their ratios.

However, most approaches to Bayesian model selection focus on computing $m_k(y)$ for each model $\mathfrak M_k$, in order to derive the posterior probabilities (\ref{eq:posterior_probabilities}). This forces $\pi(\theta)$ to be a proper distribution, otherwise $m_k(y)$ is not well-defined. This  is a major issue, since the influence of the choice of a prior on the outcome of the selection process is important (much more so than for estimation), and not yet well understood. 

After having selected the {\em a posteriori} most probable model, that is, the model $\mathfrak M_k$ which maximizes $p_k m_k(y)$, Bayesian inference is then performed based on the selected model's posterior distribution:
\begin{eqnarray}\label{eq:posterior_density}
\pi(\theta|y,\mathfrak M_k) &=& \frac{f_k(y|\theta)\pi(\theta)}{m_k(y)}.
\end{eqnarray}
This is typically sampled using Monte-Carlo Markov chain (MCMC) techniques \cite{Robert04}.

\subsection{Bayesian model averaging.} 

The above procedure works reasonably well if one model is significantly more probable than all the others, otherwise it fails to properly account for model uncertainty. Hence, an alternative consists in integrating out $\mathfrak M$ from the joint posterior distribution of $(\theta, \mathfrak M)$. From (\ref{eq:posterior_probabilities}) and (\ref{eq:posterior_density}), the resulting `model-averaged' posterior for $\theta$ is seen to be equal to:
\begin{eqnarray}\label{eq:BMA}
\pi(\theta | y) &=& \int  \pi(\theta | y, \mathfrak M) \pi(\mathfrak M | y) d\mathfrak M \nonumber\\
&\propto& \sum_k p_k m_k(y) \pi(\theta | y, \mathfrak M_k),
\end{eqnarray}
that is, the average of the posterior densities for all models, weighted by their posterior probabilities. This forms the basis of the Bayesian model averaging (BMA) approach \cite{Hoeting99}, which usually consists of the following steps:
\begin{enumerate}
\item\label{step:marginal} Compute the marginal likelihood $m_k(y)$ for all models, and deduce posterior probabilities (\ref{eq:posterior_probabilities});
\item\label{step:filter} Remove all models which have negligible posterior probabilities;
\item\label{step:posterior} Compute the posterior density for all remaining models, in order to form the model-averaged posterior (\ref{eq:BMA})
\end{enumerate}
Hence, BMA adds to the difficulty of computing the marginal likelihoods, inherent to Bayesian model selection, that of having to perform Bayesian inference separately within each candidate model which hasn't been ruled out in the second step. These difficulties explain why this approach is not yet widely spread within praticians.

\subsection{Model selection through mixture model estimation} 

In \cite{Kamari2014}, an alternative to the classical framework reviewed above is proposed. It consists in embeding the competing models $\mathfrak M_k$ in an encompassing mixture model $\mathfrak M_{\bs \alpha}$ (slightly abusing notations), defined by:
\begin{eqnarray}\label{eq:mixture}
y | \bs\alpha, \theta, \mathfrak M_{\bs \alpha} &\sim& \sum_k \alpha_k f_k(y|\theta),
\end{eqnarray}
with $\alpha_k \in [0,1], \sum_k \alpha_k = 1.$ Hence, each model $\mathfrak M_k$ corresponds to the degenerate case where $\alpha_k = 1$, and all other mixture weights are null. Model selection is thus re-defined as the task of estimating the mixture weights $\bs\alpha$, from a sample ${\bf y} = (y_1, \ldots, y_n)$, having endowed the weights with a Dirichlet prior:
\begin{eqnarray}\label{eq:Dirichlet}
\pi(\bs \alpha|\bs a) &\propto& \prod_k \alpha_k^{(a_k - 1)}\bs 1_{\{0\leq \alpha_k\leq 1\}} \bs 1_{\{\sum_k \alpha_k =1\}}.
\end{eqnarray}
Indeed, owing to \cite{Rousseau2011}, it can be shown that if all the $y_i$ are distributed according to a single model $\mathfrak M_k$, then the posterior distribution of the corresponding mixture weight $\alpha_k$ concentrates around $1$ as $n$ goes to infinity. Hence, this approach shares the consistency of the classical Bayesian model selection framework (see for instance \cite{Schwarz1978}).

However, contrary to the regular approach, it allows to use an improper prior for $\theta$, as long as the posterior defined under the mixture model~(\ref{eq:mixture}) is proper. This is a huge advantage, considering the dependance of the marginal likelihood (\ref{eq:marginal_likelihood}) with respect to prior choice, and the impossiblity to use improper reference priors, except by using part of the data to update it first into a proper distribution, as in \cite{Berger96}.

Moreover, this approach avoids the problematic Bayes factor computation, replacing it by mixture model estimation, which can be performed using standard MCMC procedures \cite{Marin2005}. \cite{Kamari2014} advocates the use of a Metropolis-Hastings algorithm to sample the posterior distribution of $(\theta,\alpha)$, based on the product likelihood:
\begin{eqnarray}\label{eq:mixture_product}
{\bf y} | \bs\alpha, \theta &\sim& \prod_{i=1}^n \left\{ \sum_k \alpha_k f_k(y_i|\theta) \right\}.
\end{eqnarray}
In contrast, mixture modeling is commonly based on a data augmentation scheme, which adds latent class labels $\zeta_i \in \{1, \ldots, N\}$, such that:
\begin{eqnarray}\label{eq:mixture_expansion}
{\bf y} | {\bs \zeta}, \bs\alpha, \theta &\sim& \prod_{i=1}^n f_{\zeta_i}(y_i|\theta) \\
\bs \zeta | \bs\alpha &\sim& \mathcal M(n, \alpha_1, \ldots, \alpha_N ),
\end{eqnarray}
where $\mathcal M(n, \alpha_1, \ldots, \alpha_N )$ denotes the multinomial distribution with $N$ modalities and $n$ observations. However, 
the resulting Gibbs sampler is typically prone to high intra-chain autocorrelation, due to the dificulty of the labels $\zeta_i$ to `jump' from one discrete value to the other.

Beyond these technical considerations, numerical results confirm the ability of this approach to asymptotically recover the data-generating model, while remaining computationally attractive. In view of these desirable features, the mixture modeling approach introduced in \cite{Kamari2014} for model selection appears as a promising alternative to the usual Bayesian model selection paradigm. As we will now see, it turns out that Bayesian model selection is in fact already part of this mixture model framework.

\section{Bayesian model averaging as a single-datum mixture model estimation}\label{sec:BMA}


We now state our main result:

\begin{theorem}[BMA - mixture modeling inclusion]
The BMA posterior distribution (\ref{eq:BMA}) is equal to the posterior distribution of the mixture model (\ref{eq:mixture}), written in the special case of a single observation:
\begin{eqnarray}\label{eq:mixture_model_posterior}
\pi(\theta | y) &\propto&  \sum_k p_k f_k(y|\theta)\pi(\theta),
\end{eqnarray}
provided that the prior expectations of the mixture weights under the Dirichlet prior (\ref{eq:Dirichlet}), given by: $\mathbb E[\alpha_k]=\frac{a_k}{\sum_\ell a_\ell}$, are equal to $p_k.$

\end{theorem}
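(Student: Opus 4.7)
The plan is to show the equivalence by evaluating both sides in closed form and observing they coincide. The direction most naturally attacked first is the mixture-model side, since writing down its posterior for a single datum and marginalizing out the weights is a direct calculation; the BMA side is then handled simply by substituting the definitions already recorded in the paper.

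First I would start from the mixture specification (\ref{eq:mixture}) specialized to $n=1$, so the joint posterior of $(\theta,\bs\alpha)$ under the prior $\pi(\theta)\pi(\bs\alpha|\bs a)$ is
\begin{equation*}
\pi(\theta,\bs\alpha\mid y) \;\propto\; \Bigl(\sum_k \alpha_k f_k(y|\theta)\Bigr)\,\pi(\theta)\,\pi(\bs\alpha|\bs a).
\end{equation*}
To obtain the marginal posterior of $\theta$ I would integrate $\bs\alpha$ out of this expression. Because the sum-in-$k$ is linear in the $\alpha_k$, the integral exchanges with the sum, and each term reduces to
\begin{equation*}
\int \alpha_k\,\pi(\bs\alpha|\bs a)\,d\bs\alpha \;=\; \mathbb E[\alpha_k] \;=\; \tfrac{a_k}{\sum_\ell a_\ell} \;=\; p_k,
\end{equation*}
using the hypothesis on the Dirichlet hyperparameters. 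This yields exactly $\pi(\theta\mid y) \propto \sum_k p_k f_k(y|\theta)\,\pi(\theta)$, which is (\ref{eq:mixture_model_posterior}).

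Next, I would show this coincides with the BMA posterior (\ref{eq:BMA}). Substituting the model-specific posterior density (\ref{eq:posterior_density}) into (\ref{eq:BMA}) gives
\begin{equation*}
\pi(\theta\mid y) \;\propto\; \sum_k p_k m_k(y)\,\frac{f_k(y|\theta)\pi(\theta)}{m_k(y)} \;=\; \pi(\theta)\sum_k p_k f_k(y|\theta),
\end{equation*}
so the two expressions agree up to the common normalizing constant $\sum_j p_j m_j(y)$, which is exactly the marginal of $y$ under either construction.

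Honestly I do not expect any real obstacle: the proof is essentially a one-line marginalization plus matching of definitions. The only subtlety worth flagging is that the argument genuinely requires $n=1$, because for $n\ge 2$ the product likelihood (\ref{eq:mixture_product}) expands into cross terms $\prod_i \alpha_{k_i}$ whose Dirichlet expectations are not simply $\prod_i p_{k_i}$, so the collapse to a weighted sum fails; I would make this restriction explicit in the write-up to prevent misreading.
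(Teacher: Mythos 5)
Your proposal is correct, and it is actually more complete than the paper's own proof. The paper's argument consists only of your second step: substituting the model-specific posterior (\ref{eq:posterior_density}) into (\ref{eq:BMA}) so that the marginal likelihoods $m_k(y)$ cancel, leaving $\sum_k p_k f_k(y|\theta)\pi(\theta)$. The paper never explicitly performs the other half of the identification --- marginalizing $\bs\alpha$ out of the single-datum mixture posterior under the Dirichlet prior --- and consequently its displayed proof never actually invokes the hypothesis $\mathbb E[\alpha_k]=a_k/\sum_\ell a_\ell=p_k$; that condition is simply folded into the statement of what ``the posterior distribution of the mixture model'' means. Your first step supplies exactly this missing piece: the linearity of $\sum_k \alpha_k f_k(y|\theta)$ in $\bs\alpha$ lets the integral pass through the sum, each term contributing $\mathbb E[\alpha_k]=p_k$, which is precisely where the stated hypothesis does its work. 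Your closing remark about why the argument is confined to $n=1$ --- that for $n\ge 2$ the product likelihood produces cross terms whose Dirichlet expectations are not products of the $p_k$ --- is also a genuine clarification that the paper leaves implicit in the phrase ``single-datum.'' In short, your route subsumes the paper's and makes the role of the Dirichlet-expectation condition visible, which the paper's one-line cancellation does not.
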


\begin{proof} 
Simply observe that the BMA posterior (\ref{eq:BMA}) can be re-written as:
\begin{eqnarray*}
\pi(\theta | y) &\propto& \sum_k p_k m_k(y) \pi(\theta | y, \mathfrak M_k)\nonumber\\
&\propto& \sum_k p_k m_k(y)  \frac{f_k(y|\theta)\pi(\theta)}{m_k(y)}\nonumber\\
&\propto& \sum_k p_k f_k(y|\theta) \pi(\theta).
\end{eqnarray*}
\end{proof}


This result suggests a new definition for the BMA posterior, according to (\ref{eq:mixture_model_posterior}). This has several advantages over the traditional formulation (\ref{eq:BMA}):

\begin{itemize} 

\item it does not depend on the marginal likelihoods $m_k(y).$ In particular, it remains well-defined even when the prior on shared parameters are improper, as long as the right member in (\ref{eq:mixture_model_posterior}) is integrable;

\item The model-averaged posterior can be sampled directly, as described in the next section. In this way, BMA can be implemented using a single MCMC algorithm, without computing marginal likelihoods or having to pre-select models;

\item The parameters of the different models are considered all at once, as part of a joint parameter space, rather than on separate spaces, according to the usual setting of Bayesian model selection/averaging. This is what enables to share parameters between models, reducing the computational cost of the approach and widening possible prior choices. It also avoids having to `jump' accross models of varying dimensions \cite{Green95}.

\end{itemize}

\subsection{A generic MCMC algorithm for Bayesian model selection and averaging}\label{sec:algo}


The simplest way to sample from the BMA posterior~(\ref{eq:mixture_model_posterior}) is to use the Metropolis-Hastings (MH) algorithm \cite{Robert04}:

\begin{enumerate}
\item Given a current value $\theta,$ generate a proposal $\theta^\ast$ from an instrumental distribution $q(\theta^\ast|\theta)$;
\item Compute the acceptance rate:
$$
\alpha = \min\left\{ \frac{ \sum_k p_k f_k(y|\theta^\ast)\times \pi(\theta^\ast) \times q(\theta|\theta^\ast) }{ \sum_k p_k f_k(y|\theta) \times \pi(\theta)\times  q(\theta^\ast|\theta) }, 1 \right\}
$$
\item Simulate $u\sim \mathcal U([0,1])$; if $u < \alpha$ then update current value: $\theta^\ast \to \theta$. Otherwise, keep $\theta$ as it is.
\end{enumerate}

A default choice for the instrumental distribution $q(\theta^\ast|\theta)$ when $\theta$ is defined on an open subset of $\mathbb R^p$ is the Gaussian random-walk: $\theta^\ast \sim \mathcal N(\theta; \Sigma)$, where the instrumental covariance matrix $\Sigma$ is chosen to attain a `reasonable' acceptance rate, not too low in order to avoid being stuck in a single point, and not too high in order to explore the whole posterior distribution. Blocked MH, where subsets of the parameter vectors are updated in turn, is usually a good idea to promote chain mixing.

Note that this is similar, but different, from the MCMC algorithm introduced in \cite{Carlin95}, in which the joint distribution of $(\theta,\mathfrak M)$ (where $\mathfrak M$ is the unknown data-generating model) is sampled using a blocked Gibbs sampler. Here, we use the fact that $\mathfrak M$ can be integrated out analytically given $\theta$, which reduces the dimension and promotes better mixing of the Markov chain. Another difference is that in \cite{Carlin95}, parameters are not shared accross models, though their approach could easily be extended in that direction.

\paragraph{Posterior model probability.}

Once the BMA posterior $\pi(\theta|y)$ has been sampled, the posterior probability for model $\mathfrak M_k$ can easily be recovered as the posterior expectation:
\begin{eqnarray*}
\pi(\mathfrak M_k | y) &=& \int \pi(\mathfrak M_k | \theta, y) \pi(\theta | y)d\theta\\
&=& \int \frac{p_k f_k(y | \theta)}{ \sum_j p_j f_j(y | \theta)} \pi(\theta | y)d\theta.
\end{eqnarray*}


In practice, having generated a BMA posterior sample $(\theta_1, \ldots, \theta_S)$, the posterior probabilities are estimated as:
\begin{eqnarray}\label{eq:posterior_probability}
\widehat \pi(\mathfrak M_k|y) &=& 
\frac{1}{S} \sum_{s=1}^S 
\frac
{p_k f_k(y | \theta_s)}
{ \sum_j p_j f_j(y | \theta_s)}.
\end{eqnarray}
The Bayes factor $BF_{k\ell}(y) = \frac{m_k(y)}{m_\ell(y)}$, that is, the prior to posterior odds ratio for model $\mathfrak M_k$ vs. model $\mathfrak M_\ell$, can then be estimated as $\frac{\widehat \pi(\mathfrak M_k|y)}{\widehat \pi(\mathfrak M_\ell|y)}\times\frac{p_\ell}{p_k}$.

Note that the $f_k(y | \theta_s)$s are usually already computed for all models $k$ and at each iteration $s$ during the posterior sampling process, for instance to compute the acceptance rate of the above-described Metropolis-Hastings algorithm.

\paragraph{Model-specific inference.}

Moreover, the posterior sample $(\theta_1,\ldots, \theta_S)$ can be used as proposals in an importance sampling approach to perform inference within each candidate model $\mathfrak M_k$, with importance weights equal to: 
$$
w_k(\theta_s) = \frac{p_k f_k(y | \theta_s)} { \sum_j p_j f_j(y | \theta_s)}.
$$

Hence, an attractive feature of this approach is that BMA, Bayesian inference and posterior probabilities for {\em all models} are obtained using a single MCMC sampling algorithm, instead of requiring a new MCMC run for each model. 

The quality of the weighted particle system $(\theta_s, w_k(\theta_s))_{1\leq s\leq S}$ as an approximation of the posterior distribution for model $\mathfrak M_k$ is conveniently measured by the effective sample size (ESS) \cite{Doucet01}:
\begin{eqnarray}
ESS_k &=& \frac{\left(\sum_s w_k(\theta_s)\right)^2}{\sum_s w_k(\theta_s)^2}.
\end{eqnarray}
$ESS_k$ takes values between $1$ and $S$ 
and is interpreted as the size of an iid posterior sample carrying the same amount of information as the weighted particle system $(\theta_s, w_k(\theta_s))_{1\leq s\leq S}$. 

The following tight bounds on the estimation variance of $\widehat \pi(\mathfrak M_k|y)$ and the $ESS_k$ show that these are well-behaved numerically:
\begin{proposition}\label{prop:bounds}
$ $ \newline
\begin{enumerate}
\item The variance of the posterior probability estimate (\ref{eq:posterior_probability}) is bounded above by:
$$
\mathbb V \left[ \widehat \pi(\mathfrak M_k|y) \right]  \leq \frac{1}{S} \pi(\mathfrak M_k|y)(1 - \pi(\mathfrak M_k|y))
$$
\item $ESS_k $ is bounded below by:
$$
ESS_k \geq S \times\widehat \pi(\mathfrak M_k|y)
$$
\end{enumerate}
\end{proposition}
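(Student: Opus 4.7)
The plan is to exploit the single observation that each importance weight $w_k(\theta_s) = p_k f_k(y\mid\theta_s) / \sum_j p_j f_j(y\mid\theta_s)$ lies in $[0,1]$, being the ratio of a non-negative quantity to a sum of non-negative quantities that already includes it. Both items will reduce to elementary manipulations built on this bound.

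For item (1), I would treat $\widehat \pi(\mathfrak M_k|y)$ as the empirical mean of the $[0,1]$-valued random variables $w_k(\theta_s)$. The identity $\pi(\mathfrak M_k|y) = \mathbb E_{\pi(\theta|y)}[w_k(\theta)]$, already used in the paper to justify the estimator (\ref{eq:posterior_probability}), gives unbiasedness. Under the iid assumption on the $\theta_s$, $\mathbb V[\widehat\pi(\mathfrak M_k|y)] = \mathbb V[w_k(\theta)]/S$, and the key trick is that $w_k \in [0,1]$ implies $w_k^2 \leq w_k$, hence $\mathbb E[w_k^2] \leq \mathbb E[w_k] = \pi(\mathfrak M_k|y)$, which directly yields the Bernoulli-type bound $\mathbb V[w_k]\leq \pi(\mathfrak M_k|y)(1-\pi(\mathfrak M_k|y))$. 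The only delicate point is that in practice the $\theta_s$ come from an MCMC chain, so the bound should be read as the reference iid variance, with the usual inflation by the integrated autocorrelation time being treated separately.

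For item (2), I would substitute the definition of $ESS_k$ and of $\widehat\pi(\mathfrak M_k|y) = S^{-1}\sum_s w_k(\theta_s)$, so that the desired inequality $ESS_k \geq S\,\widehat\pi(\mathfrak M_k|y)$ becomes $(\sum_s w_k(\theta_s))^2 \geq (\sum_s w_k(\theta_s)^2)(\sum_s w_k(\theta_s))$. Cancelling one factor of $\sum_s w_k(\theta_s)$ (positive in any nondegenerate case) reduces this to $\sum_s w_k(\theta_s) \geq \sum_s w_k(\theta_s)^2$, i.e.\ $\sum_s w_k(\theta_s)(1-w_k(\theta_s))\geq 0$, which holds term by term since every weight lies in $[0,1]$, with equality iff all weights are in $\{0,1\}$.

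I do not expect any real obstacle: both statements are essentially corollaries of the $[0,1]$-valuedness of the weights. The only item worth flagging in the write-up is the iid caveat in item (1), since the proposition is applied in practice to a Markov chain output rather than to genuinely independent draws.
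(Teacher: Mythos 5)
Your proof is correct and follows essentially the same route as the paper: both items rest on the observation that $0\leq w_k(\theta_s)^2\leq w_k(\theta_s)\leq 1$ with $\mathbb E[w_k(\theta_s)]=\pi(\mathfrak M_k|y)$, giving the Bernoulli-type variance bound for item (1) and the term-by-term inequality $\sum_s w_k(\theta_s)^2\leq\sum_s w_k(\theta_s)$ for item (2). Your added caveat that the iid assumption is only nominal for MCMC output is a sensible remark, but otherwise the argument matches the paper's.
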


\begin{proof}
Observe that the $(w_k(\theta_s))_{1\leq S}$ are iid, between $0$ and $1$, with mean:
$$
\mathbb E[w_k(\theta_s)] = \int \frac{p_k f_k(y | \theta)} { \sum_j p_j f_j(y | \theta)} \pi(\theta|y) = \pi(\mathfrak M_k|y).
$$
Since $0\leq w_k(\theta_s)^2 \leq w_k(\theta_s)\leq 1,$ we have $\mathbb E[w_k ^2(\theta_s)] \leq \mathbb E[w_k(\theta_s)] = \pi(\mathfrak M_k|y),$ implying:
$$
\mathbb V[w_k(\theta_s)] \leq \pi(\mathfrak M_k|y) (1 - \pi(\mathfrak M_k|y)),
$$
whence the first result, noting that: 
$$
 \widehat \pi(\mathfrak M_k|y) = \frac{1}{S} \sum_{s=1}^S w_k(\theta_s).
$$
\newline
Likewise, since $w_k(\theta_s)^2 \leq w_k(\theta_s),$
$$
\sum_s w_k(\theta_s)^2 \leq \sum_s w_k(\theta_s),
$$
which implies the second result.
\end{proof}

The lower bound on the ESS for model-specific inference highlights the fact that the mixture model approach to BMA is in a sense `self-pruning', since poor ESS values are systematically associated with models which have low posterior probabilities, hence which can be left out of the analysis entirely. On the other hand, it is perfectly possible to have high ESS values for a model with low posterior probability, provided that the support of its posterior density overlaps that of another, high posterior probability, model.

%
%
%

\section{Illustration: Poisson / Geometric selection}
\label{sec:Illustration}

We exploit Example 3.1. from \cite{Kamari2014}: 
\newline

Consider a model choice test between a Poisson  $\mathcal P(\lambda)$ and a Geometric $\mathcal Geo(p)$ distribution, where the latter is defined as a number of failures and hence also starts at zero. We can represent the mixture~(\ref{eq:mixture}) using the same parameter  $\theta := \lambda$ in the two distributions if we set  $p = 1/(1 + \lambda).$ The resulting mixture is then defined as
$$
\frac{1}{2}\left\{\mathcal P_\lambda^n + \mathcal Geo_{1/(1+\lambda)}^n\right\},
$$
with equal prior weights for each model. This common parameterization allows the use of Jeffreys' (1939) improper prior $\pi(\lambda) = 1/\lambda$, since the resulting posterior is then proper. 

The likelihood and priors for a realization $y=(y_1,\ldots,y_n)$ of the above mixture are then given by:
\begin{eqnarray*}
f(y | \lambda) &=& \frac{1}{2}\left\{d\mathcal P^n_\lambda(y) + d\mathcal Geo_{1/(1+\lambda)}^n(y)\right\}\nonumber\\
&=& \frac{1}{2}\left\{\prod_k d\mathcal P_\lambda(y_k) + \prod_k d\mathcal Geo_{1/(1+\lambda)}(y_k)\right\}\\
&=& \frac{1}{2}\left\{\frac{e^{-\lambda n} \lambda^{S_n}}{\prod_k y_k!}  + \frac{\lambda^{S_n}}{(1+\lambda)^{S_n+n}}\right\}\\
\pi(\lambda) &\propto& 1/\lambda
\end{eqnarray*}
where $S_n := \sum_k y_k$.

Multiplying by $\pi(\lambda)$ yields the posterior marginal distribution of $\lambda$, which is proportional to the sum of the unnormalized posterior distributions within each model:
\begin{eqnarray*}
\pi(\lambda|y) &\propto& \frac{e^{-\lambda n} \lambda^{S_n-1}}{\prod_k y_k!}  + \frac{\lambda^{S_n-1}}{(1 + \lambda)^{S_n+n}}.
\end{eqnarray*}
Integrating each term with respect to $\lambda$ then gives the marginal distribution in the Poisson and Geometric models: 
$$
m_0 = \frac{\Gamma(S_n)}{n^{S_n}\prod_k y_k!};\quad m_1 = \frac{\Gamma(S_n)\Gamma(n)}{\Gamma(S_n+n)}
$$
whence the Bayes factor reads:
$$
B_{01}(x) = \frac{\Gamma(S_n+n)}{n^{S_n}\prod_k y_k!\Gamma(n)}.
$$
Note that it is not well-defined mathematically since the prior on  $\lambda$ is improper. In the mixture-model setting on the contrary, there is no difficulty in using such an improper prior, as long as the posterior $\pi(\lambda|y)$ is proper. 

$\lambda$'s posterior distribution can be sampled using a standard random-walk Metropolis-Hastings algorithm, initialized for instance using the posterior mean or mode within each models: $\hat\lambda = (S_n-1)/n$ for the Poisson and $\hat\lambda = (S_n-1)/(n+1)$ for the Geometric model. In practice, we used a Gaussian random walk, tuning the scale-parameter in order to attain an acceptance rate in $[.2, .8]$.

\paragraph{Results}

\begin{figure}
\includegraphics[width=.5\textwidth]{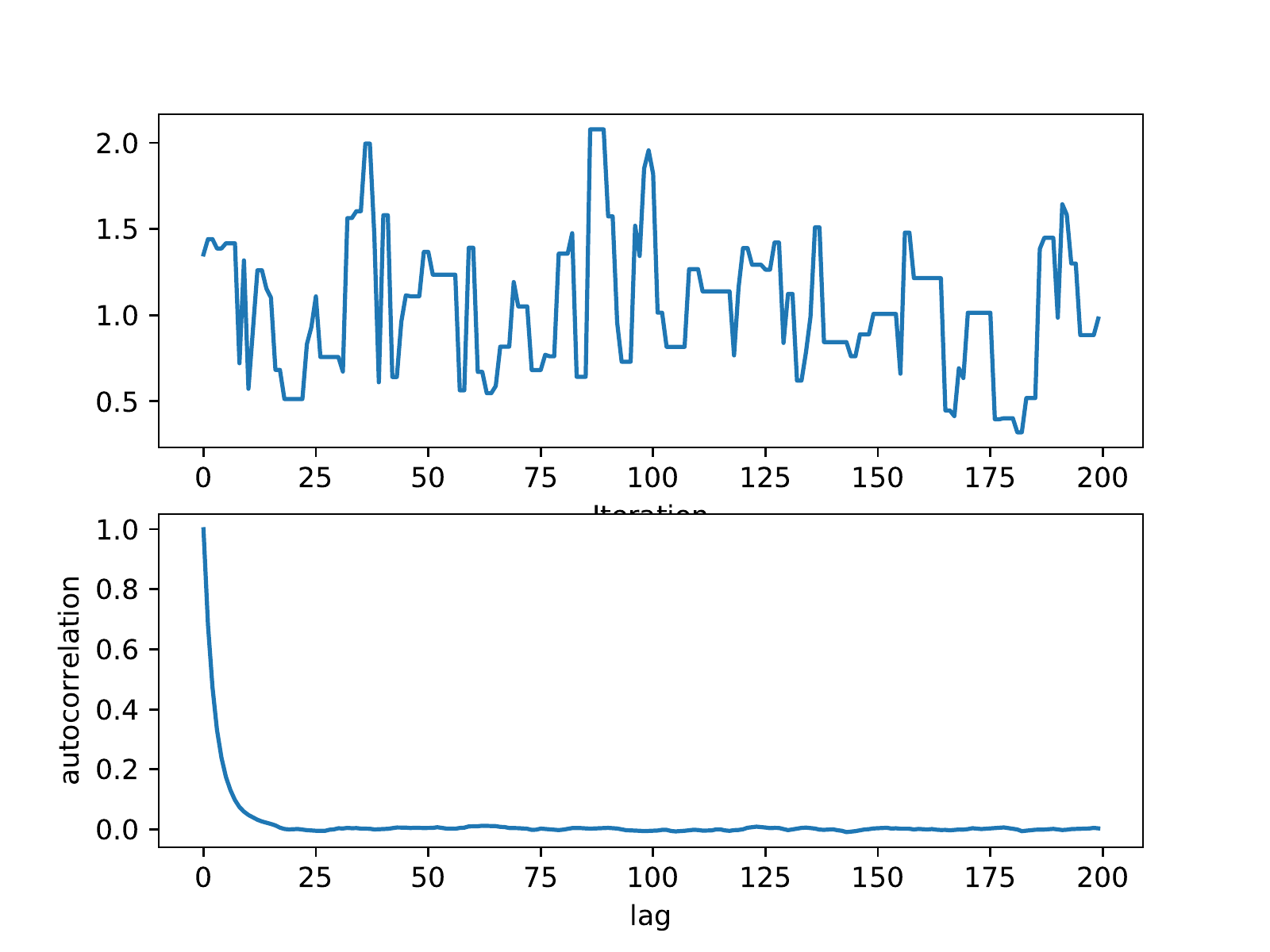}
\includegraphics[width=.5\textwidth]{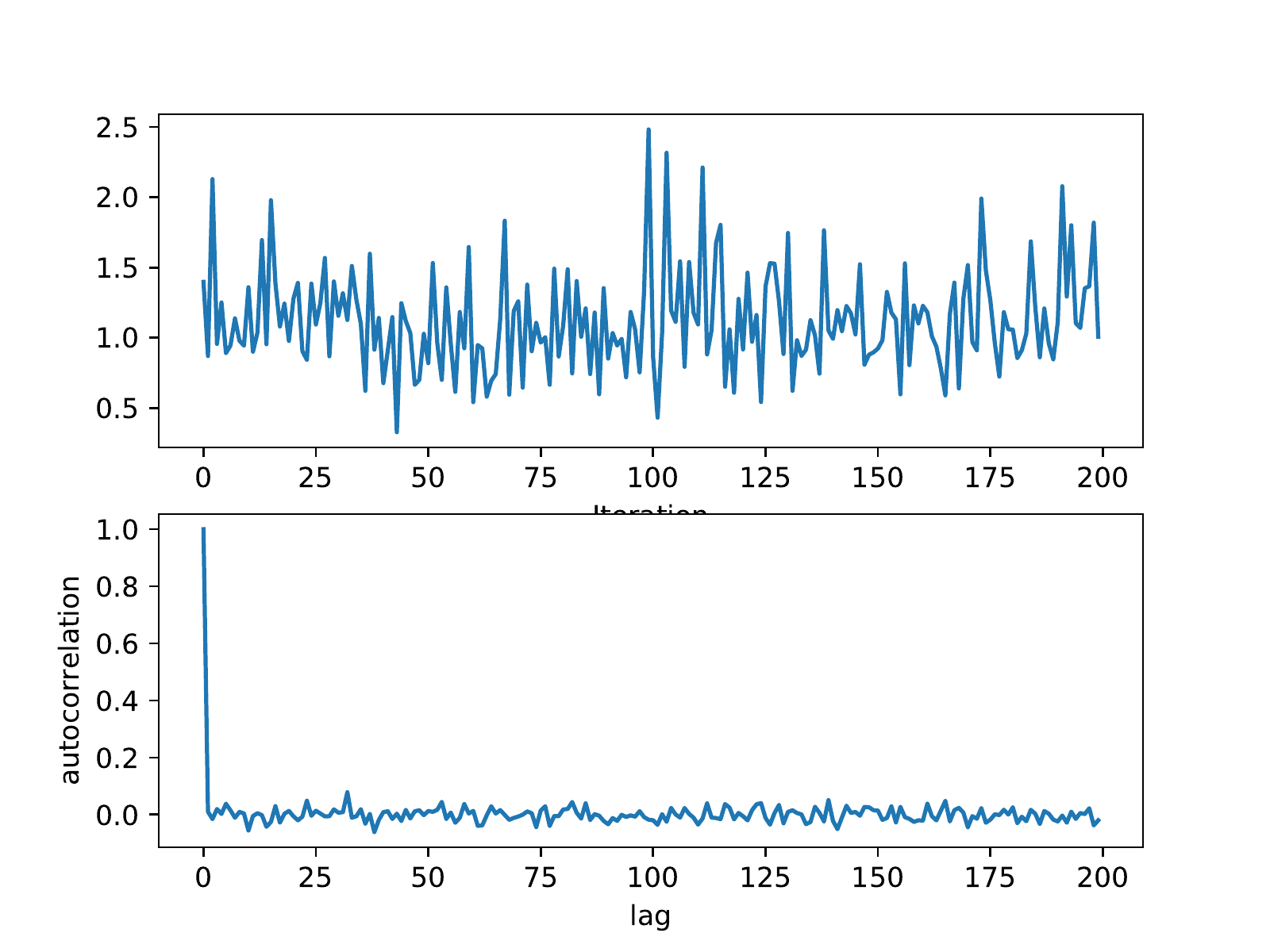}
\caption{\label{fig:cv_plot} MCMC for $\lambda$ (top) and autocorrelation (bottom). Left: before thinning, right: after thinning (1/50 iterations kept)}
\end{figure}

We simulated $n=10$ Poisson variates with $\lambda=1.$ Then, we ran $10^5$ iterations of the above Metropolis-Hastings algorithm, which took about ten seconds on a 2.4 GHz Mac Book Pro. The acceptance rate was close to $54\%$. Figure~\ref{fig:cv_plot} shows a convergence plot of the MCMC used to sample from $\lambda$'s posterior distribution. Mild autocorrelation was suppressed by thinning. This allowed to consider the output as iid, which enabled computing confidence intervals for the Bayes factor Monte-Carlo estimate using the central limit theorem. Results for the computation of the Bayes factor are presented in Table~\ref{tab:BF}. These can be used for instance to select the `most probable' model for the data, here the true one $\mathfrak M_0$, since the estimated Bayes factor is greater than one.

\begin{table}
\centering
\begin{tabular}{cc}
$BF_{01}$& 8.38 \\
$\widehat{BF}_{01}$& 8.41 \\
LCL& 8.15 \\
UCL& 8.69 \\
\end{tabular}

\caption{\label{tab:BF} Computation of the Bayes factor}
\end{table}

\begin{figure}
\begin{center}
\includegraphics[width=.5\textwidth]{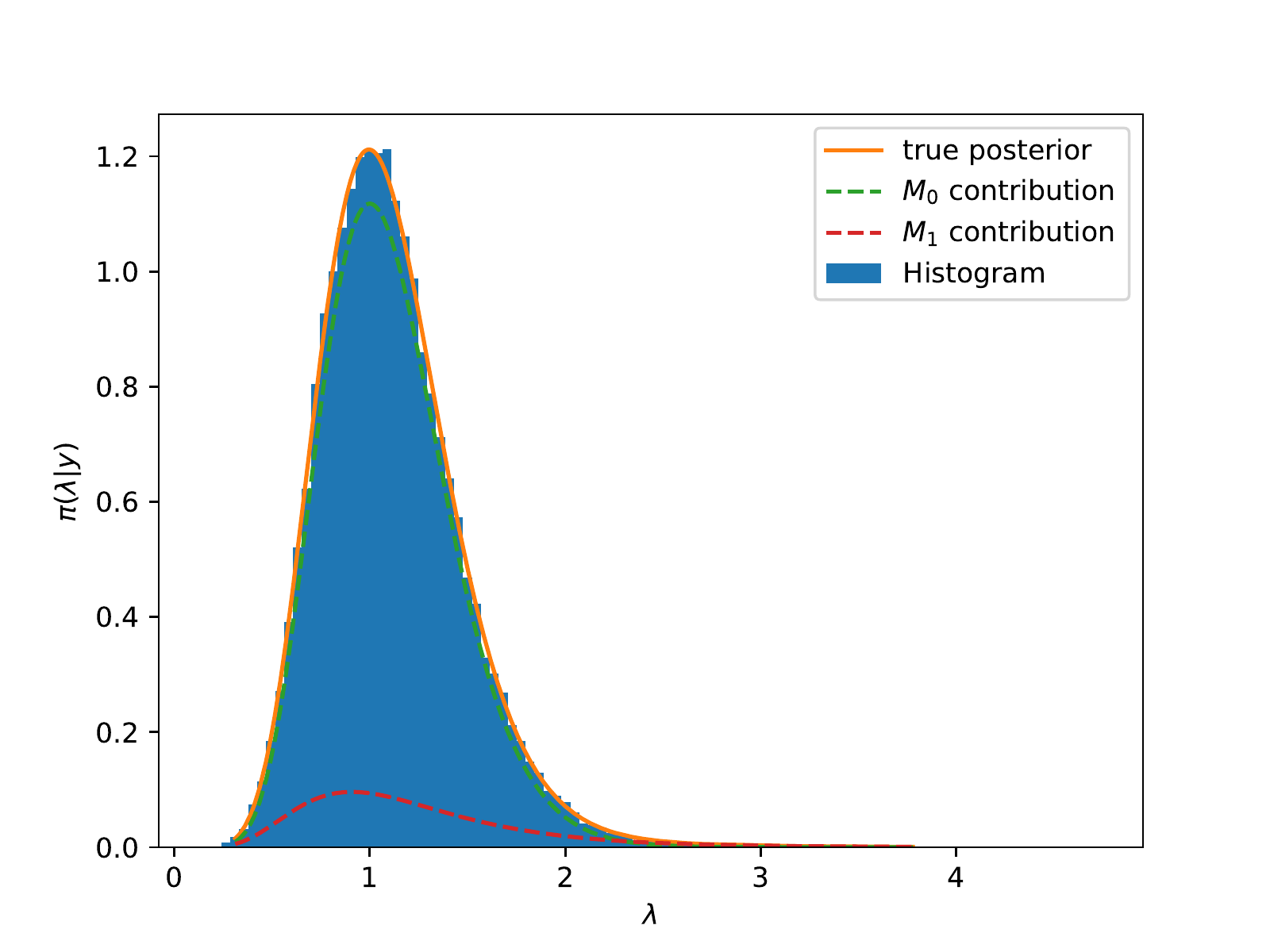}\\
\includegraphics[width=.5\textwidth]{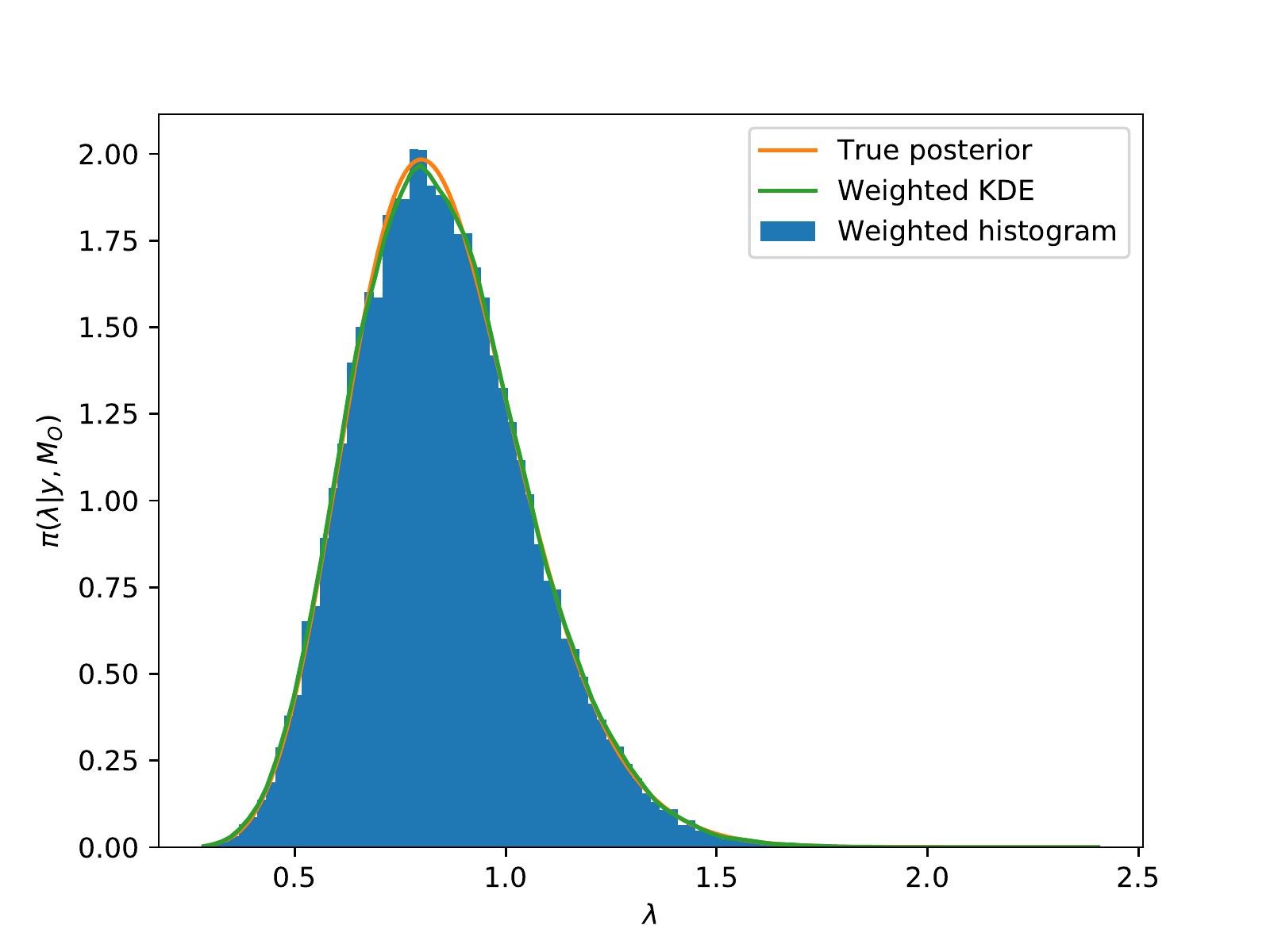}\hfill
\includegraphics[width=.5\textwidth]{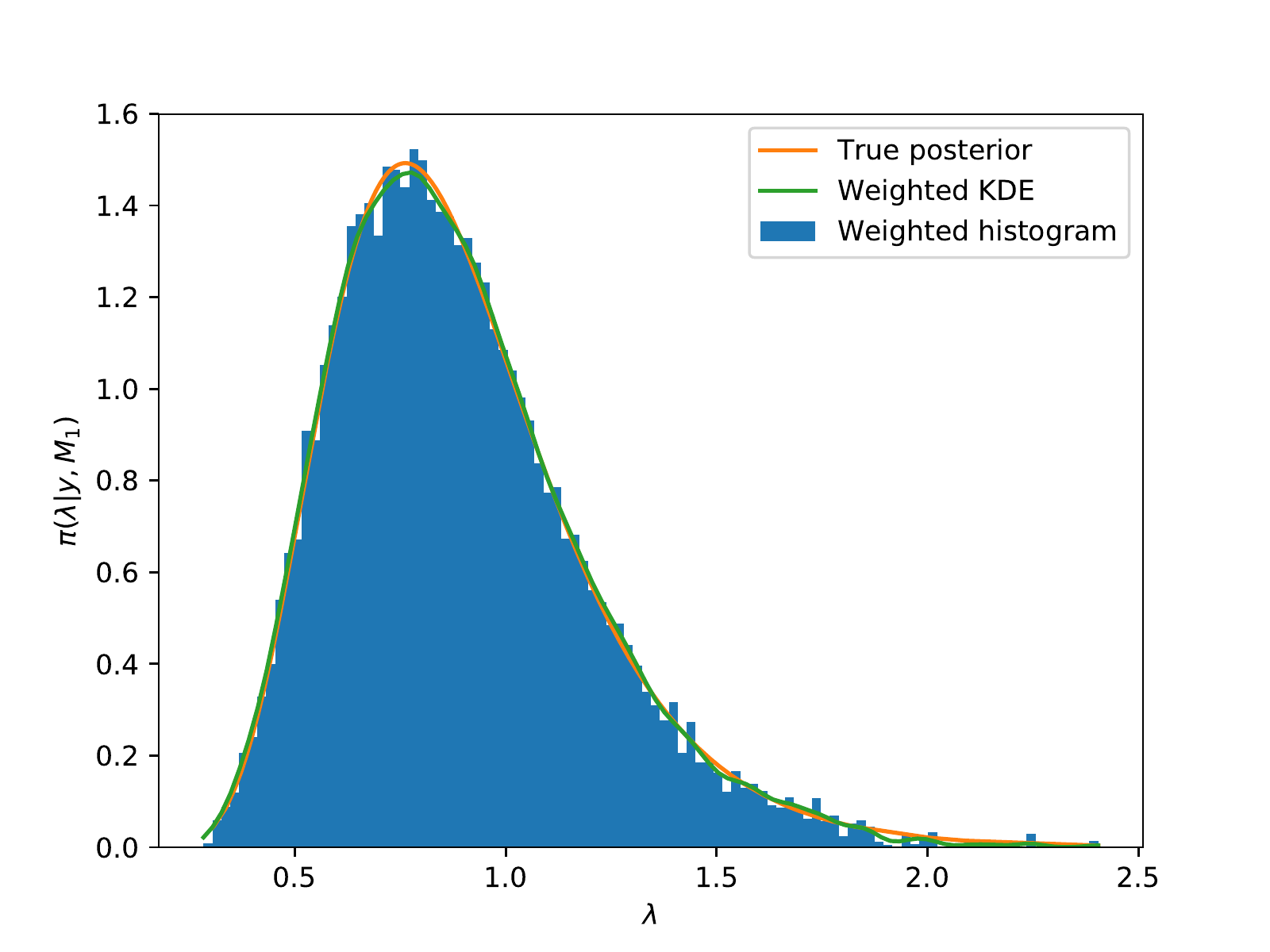}
\end{center}
\caption{\label{fig:BMA} Histogram of model-averaged posterior density for $\lambda$ (top), histogram of posterior density for $\lambda$ in model $\mathfrak M_0$ (bottom left) and in model $\mathfrak M_1$ (bottom right).}
\end{figure}

To illustrate the fact that both model-specific and model-averaged inference can be performed using this approach, Figure~\ref{fig:BMA} shows the histogram of the Bayesian model-averaged posterior distribution for $\lambda$, as well as the (weighted) histograms for the posterior distributions within each candidate model, obtained through importance sampling. The ESS associated to the importance weights, equal to $99\,396$ and $69\,803$ for models $\mathfrak M_0$ and $\mathfrak M_1$, respectively, were well above the lower bounds in Proposition~\ref{prop:bounds}, equal to: $71\,027$ and $28\,972$, respectively, due to the large overlap of the posterior densities.

Using these weighted samples, we can for instance estimate $\lambda$, which in both models is equal to the data's expectation: $\lambda = \mathbb E[y].$ The ensuing results are given in Table~\ref{tab:lambda}. Results using BMA are very close to those obtained for the true model, due to its high posterior probability ($89\%$).

\begin{table}
\centering
\begin{tabular}{c|ccc}
Model & $\lambda$ estimate & LCL & UCL \\
$\mathfrak M_0$& 1.1& 0.55& 1.84 \\
$\mathfrak M_1$& 1.21& 0.45& 2.58 \\
$\mathfrak M_0+\mathfrak M_1$& 1.11& 0.54& 1.92 \\
\hline
\hline
True value & 1.0 & & 
\end{tabular}

\caption{\label{tab:lambda} Estimation of $\lambda$}
\end{table}


\section{Illustration: Linear code validation}

Consider the following formulation for linear code validation, where we wish to explain experimental data $\bs y := (y_1, \ldots, y_n)$ from explanatory variables $\bs x := (x_1, \ldots, x_n)$, using a linear code $h(\bs x) \theta:$
    \begin{eqnarray}\label{eq:M0}
    \mathfrak M_0: \bs y &=& h(\bs x) \theta + \bs\varepsilon,
    \end{eqnarray}
    or, accounting for a possible model bias $\bs\delta$ \cite{KOH2001},
    \begin{eqnarray}\label{eq:M1}
    \mathfrak M_1: \bs y &=& h(\bs x) \theta + \delta(\bs x) + \bs \varepsilon,
    \end{eqnarray}
where:
\begin{eqnarray*}
\bs\varepsilon|\kappa,\sigma^2 &\sim& \mathcal N(0,\kappa \sigma^2)\\
\delta(\cdot)|\sigma^2 &\sim& GP( m(\cdot), \sigma^2 K(\cdot, \cdot) )\\
\pi(\theta) &\propto & 1 \\
\pi(\sigma^2) &\propto & \sigma^{-2} \\
\pi(\kappa) &\sim& \mathcal U([0,1])
\end{eqnarray*}

For this exercise, the correlation function $K$ is chosen as a squared exponential covariance function: $K(x,x')=e^{-(\frac{x-x'}{\gamma})^2}$, with correlation length $\gamma=0.2.$ For simplicity, this assumed to be known in advance, even though such is usually not the case in real-life practice. 

$k=1/\kappa$ represents a (squared) signal-to-noise ratio between the discrepancy and the measurement errors \cite{Damblin2016}. Hence, we assume {\em a priori} that $k>1$, otherwise the discrepancy is swamped in the observation noise, and becomes very difficult to detect. This is why we choose to model $\kappa = 1/k$ as uniformly distributed between $0$ and $1$, though a more general Beta distribution could be used.

Meanwhile, because $\lambda$ and $\theta$ are common to both models, they can be endowed improper priors, as long as they give rise to a proper posterior distribution. A standard choice for Gaussian linear regression models is Jeffreys' prior:

\begin{eqnarray}\label{eq:prior_theta_lambda}
\pi(\theta)\pi(\lambda) &\propto& 1/\lambda.
\end{eqnarray}

In Appendix~\ref{app:propriety}, we show that this rather natural choice indeed induces a proper posterior distribution. This is in accordance with \cite{Damblin2016}, who dealt with a similar prior choice in the context of Bayes factor model selection, using {\em intrinsic} Bayes factors \cite{Berger96} to overcome the ill-defined Bayes factor. Hence, the present illustration shows how the same conclusions can be reached, using a much simpler approach, and without any foundational difficulties concerning the definition of the Bayes factor.

Finally, we choose not to favor one model over the other {\em a priori}, hence we set their prior weights equal to: $p_0 = p_1 = 1/2.$

Figure~\ref{fig:linear_data} shows how the dataset $y$ was generated for this numerical experiment, following model $\mathfrak{M}_1.$ Upon applying our methodology for Bayesian model averaging, we therefore expect to find $\pi(\mathfrak M_0|y) < \frac{1}{2}.$

\begin{figure}\label{fig:linear_data}
\centering
\includegraphics[width=0.6\textwidth]{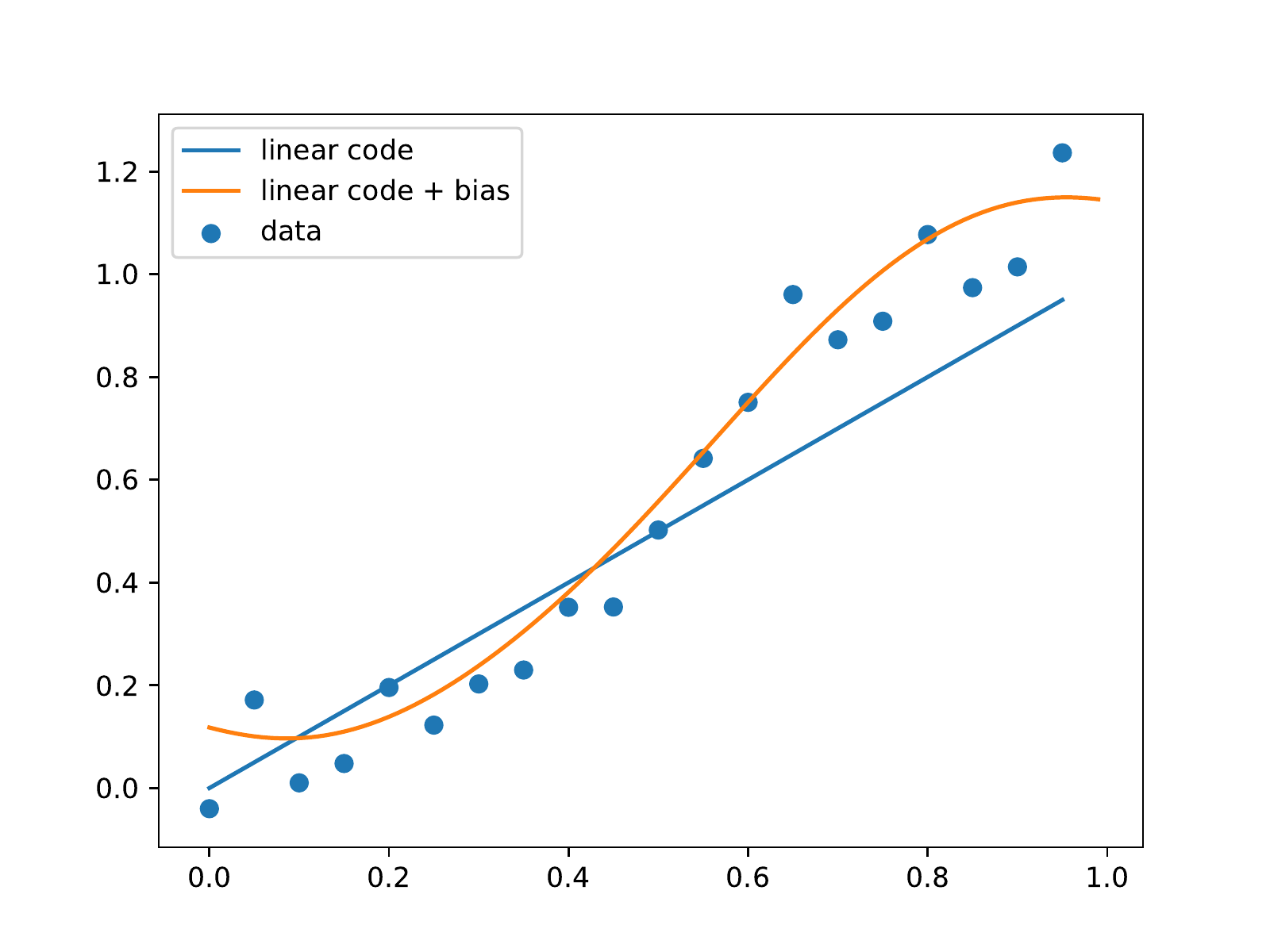}
\caption{Simulated data: linear tendency (blue line), linear tendency plus Gaussian process discrepancy (orange curve) and observations (dots).}
\end{figure}

\paragraph{Implentation details.} Appendix~\ref{app:propriety} shows that uncertain parameters $\theta$, $\lambda$ and $\delta$ can be integrated out analytically, leaving $\kappa$'s marginal posterior distribution alone to be sampled. This was done using ans independent Metropolis-Hastings (IMH) sampler, based on the uniform prior, meaning that in the algorithm described in Section~\ref{sec:algo}, we chose the proposal kernel for $\kappa$ equal to: $q(\kappa|\kappa^\ast)=\bs 1_{[0,1]}(\kappa)$.

\subsection{Results}

We ran $10\,000$ iterations of the IMH, and then thinned the resulting output, yielding approximately independent draws $(\kappa_s)_{1\leq s\leq S}$ from the BMA posterior. From these, conditional posterior probabilities $\pi(\mathfrak M_0 | \kappa_s, y)$ for model $\mathfrak M_0$ where computed, using Equation~(\ref{eq:posterior_probability}). Averaging these yielded an estimate of the (unconditional) posterior probability $\pi(\mathfrak M_0 | y)$. As shown in Table~\ref{tab:posterior_probability}, this probability was significantly below $50\%$, meaning that the data contained enough information to detect a non-zero discrepancy.

\begin{table}
\centering
\begin{tabular}{c|c|c}
$\widehat \pi(\mathfrak M_0|y)$ & $95\%$LCL & $95\%$UCL\\
\hline
28.2\% & 28.4\% & 28.7\%
\end{tabular}
\caption{\label{tab:posterior_probability} Monte-Carlo estimate of the posterior probability that the data was simulated under model $\mathfrak M_0$ (without discrepancy), together with confidence bounds. }
\end{table}

Next, we reconstructed a BMA posterior sample $(\theta_s, \lambda_s, \delta_s)_{1\leq s\leq S}$ for the remaining parameters using the following steps:
\begin{enumerate}
\item for $s=1,\ldots,S$, draw $\zeta_s \sim \mathcal B(1,\pi(\mathfrak M_1 | \kappa_s, y))$
\item simulate $(\theta_s, \lambda_s, \delta_s)$ from there full posterior conditional distribution under $\mathfrak M_0$ if $\zeta_s=0$, and under $\mathfrak M_1$ otherwise.
\end{enumerate}

\begin{figure}
\includegraphics[width=0.5\textwidth]{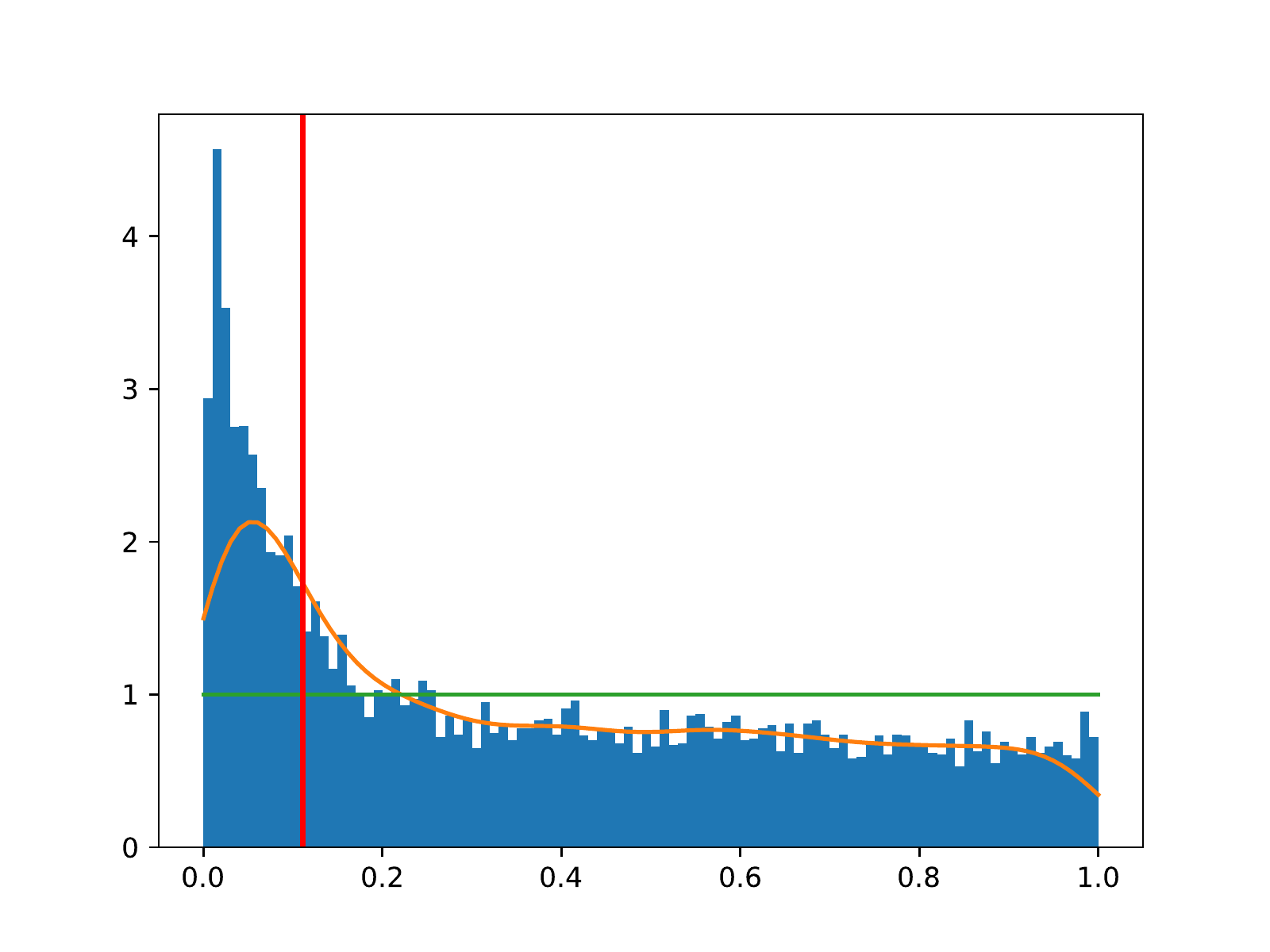}
\includegraphics[width=0.5\textwidth]{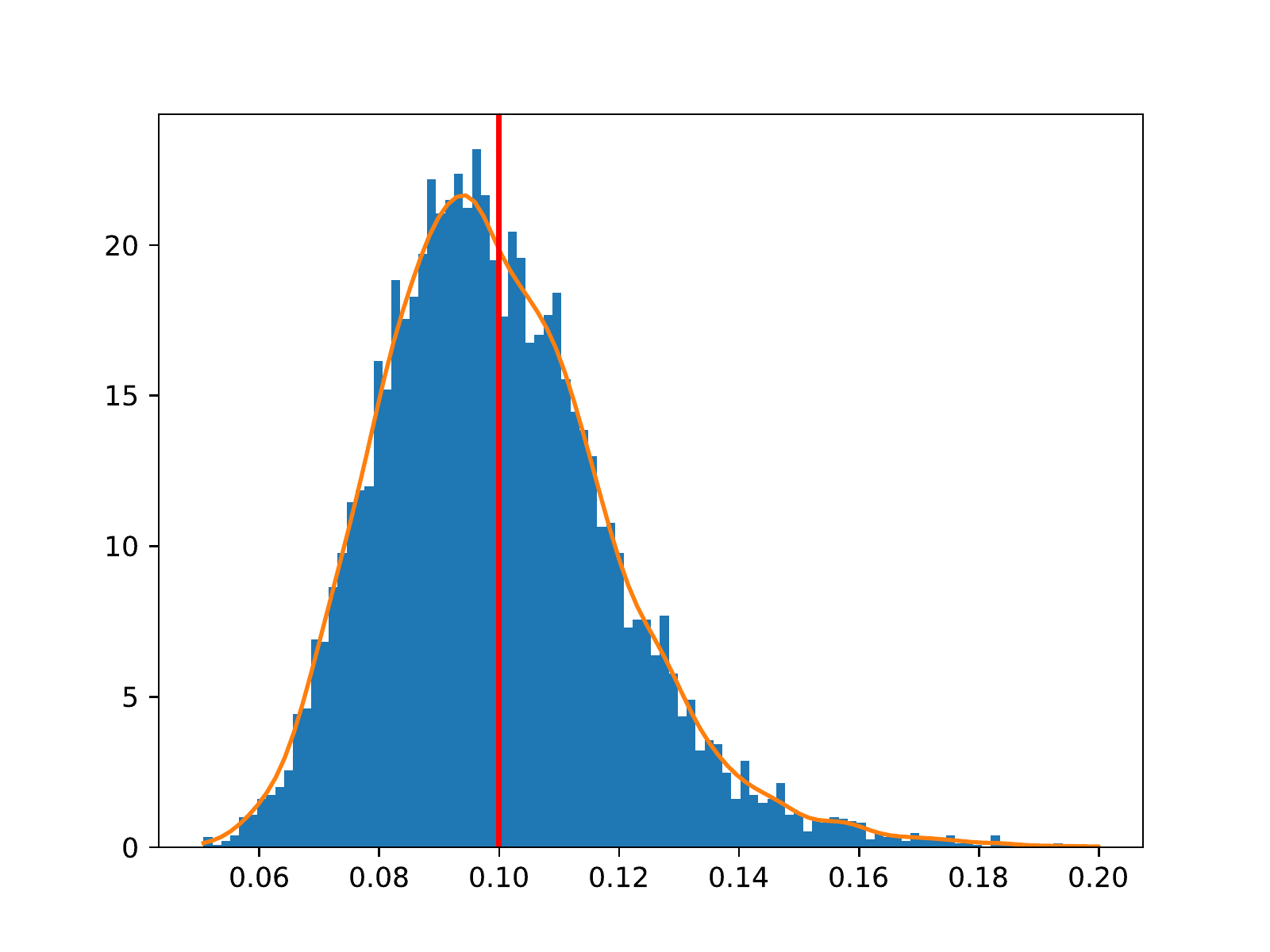}\\
\caption{\label{fig:posterior_distribution}
Model-averaged posterior densities (bars and orange curves) of $\kappa$ (left) and $\lambda$ (right), vs. prior densities (green curves)and real values (vertical lines).
}
\end{figure}

Figure~\ref{fig:posterior_distribution}\ shows the histograms of the BMA marginal posterior distributions for both $\kappa$ and $\lambda^2$. We can verify visually that $\kappa's$ BMA posterior distribution is a mixture of the uniform prior, which is the posterior under model~$\mathfrak M_0$, since then $\kappa$ is independent from the data, and the posterior distribution under~$\mathfrak M_1$, which is more or less peaked around the real value. Likewise, the observation variance $\lambda^2$ is seen to be relatively well estimated.

Finally, Figure~\ref{fig:predictions} shows the BMA prediction of the central tendency, as well as in models $\mathfrak M_0$ and $\mathfrak M_1$. The latter were directly derived from the reconstructed posterior sample, meaning that we didn't perform importance sampling as described in Section~\ref{sec:algo}, but simply assigned each BMA posterior draw $(\kappa_s, \theta_s, \delta_s, \lambda_s)$ to model $\mathfrak M_0$ if $\zeta_s=0$, and to model $\mathfrak M_1$ if $\zeta_s = 1$. As could be expected, the BMA posterior estimate is almost as accurate as the posterior estimate under the true model (here $\mathfrak M_1$), since the algorithm correctly identified the latter as being most probable.

\begin{figure}
\centering
\includegraphics[width=0.5\textwidth]{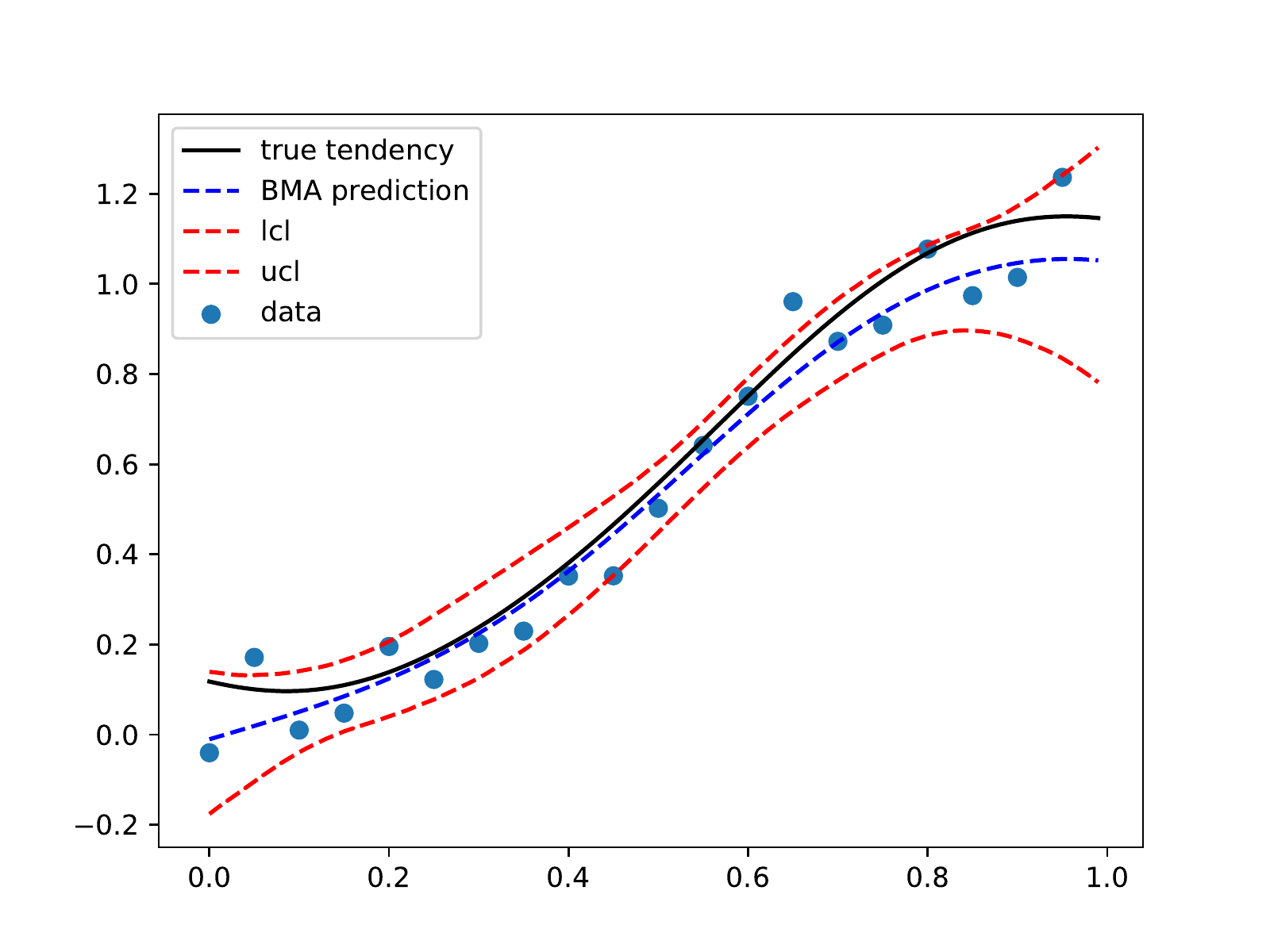}\\
\includegraphics[width=0.5\textwidth]{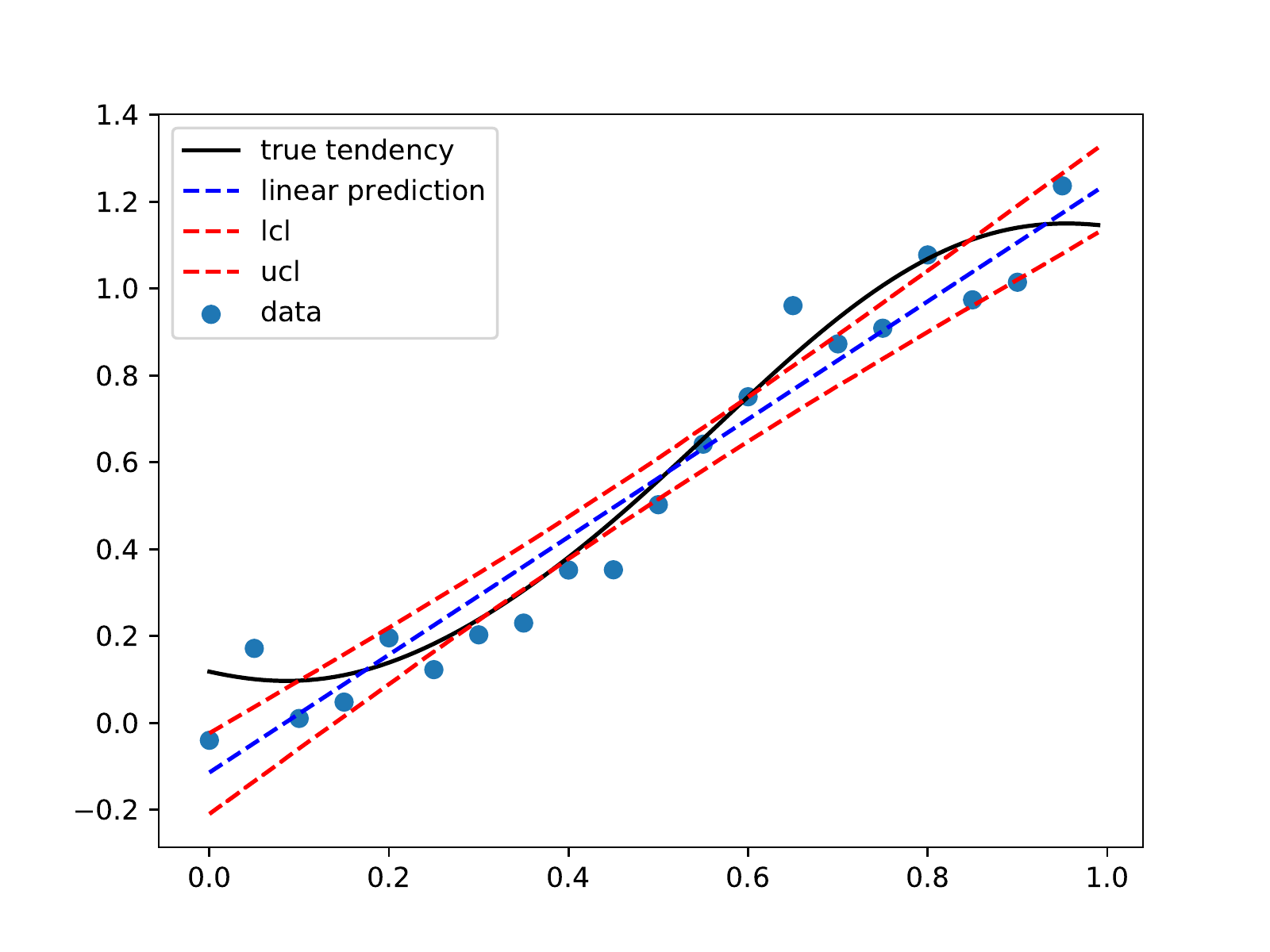}\hfill
\includegraphics[width=0.5\textwidth]{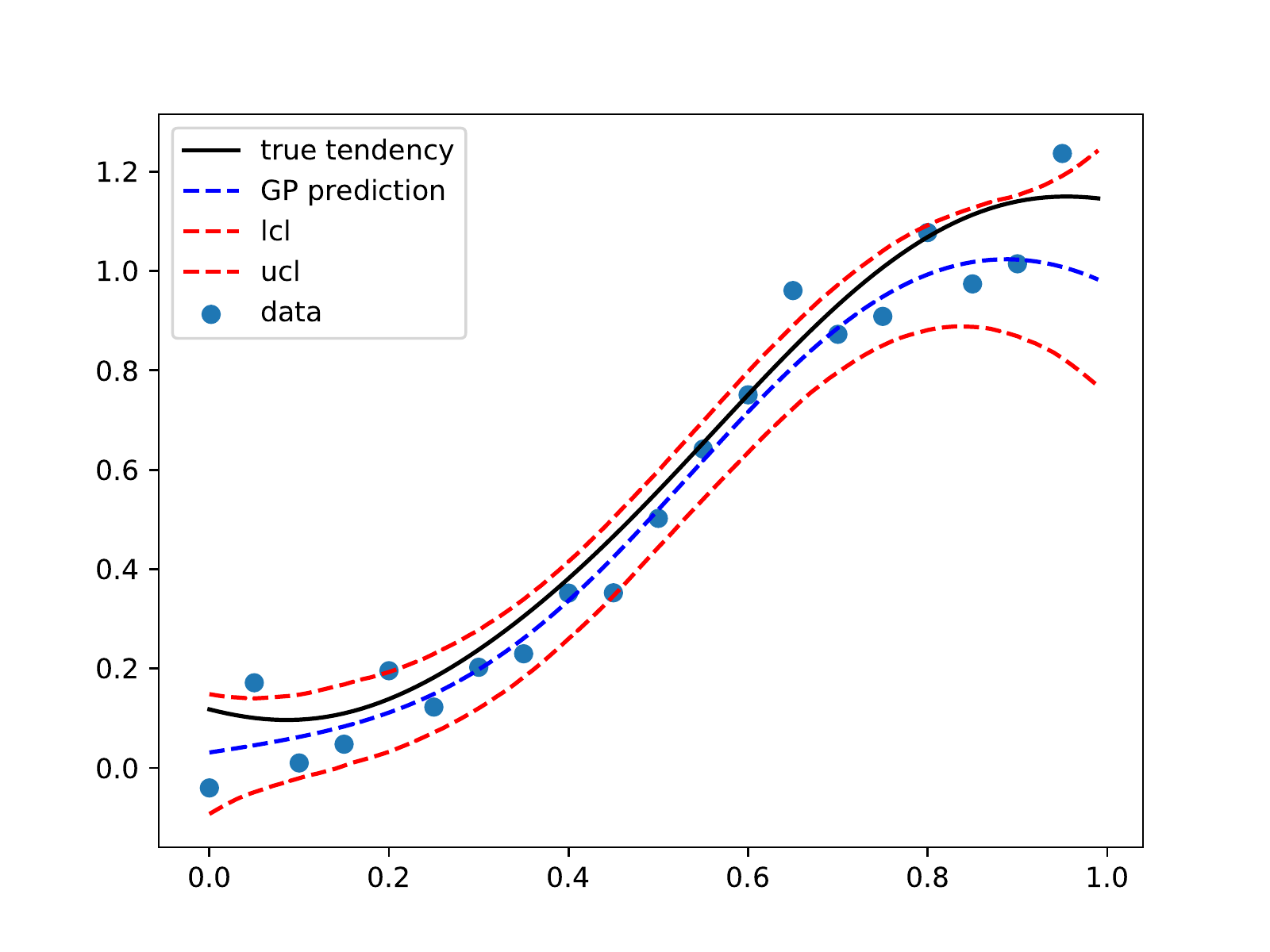}\\
\caption{\label{fig:predictions}
Estimate of the tendency under BMA (top), $\mathfrak M_0$ (bottom left)
and $\mathfrak M_1$ (bottom right).
}
\end{figure}

\section{Discussion}\label{sec:discussion}

In this paper, we have shown that the mixture modeling approach to model choice problems introduced in \cite{Kamari2014}, contains as a special case the traditional Bayesian model selection and averaging paradigm, as described for instance in \cite{Hoeting99}. This result leads to a re-formulation for the BMA posterior distribution, making it easy to sample directly using standard MCMC algorithms. This also naturally allows improper priors for shared parameters, a key feature of the approach advocated by \cite{Kamari2014}.

Moreover, once the BMA posterior is sampled, very simple estimates for the posterior probabilities of all candidate models are available, as well as importance sampling estimates for their posterior distributions. Tight bounds on the posterior probabilities estimation variances, and the ESSs of the model-specific importance weights show that these estimates are well-behaved, and exhibit a `self-pruning' property: the lower bound on the ESS associated with each model is directly proportional to its estimated posterior probability. This guarantees that high posterior-probability models are well estimated.

The main difficulty of the advocated approach consists in sampling from the encompassing mixture model posterior, which may be challenging if many candidate models are present, especially if the total vector parameter is high-dimensional, or if the resulting posterior is multi-modal. Hence, some kind of pre-treatment may still be needed in practice when the list of candidate model is large, such as pre-selecting some models using a cheap approximation to the Bayes factor, as is done in \cite{Hoeting99}. Another workaround is to divide the set of candidate models into overlaping subsets, estimate a mixture model on each subset, then reconstruct the complete BMA posterior.

Furthermore, a crucial point that remains to be investigated is the possibility of sharing parameters across models, which has the double effect of reducing the parameter space dimension, and allowing improper priors to be used. The most favorable setting seems to be variable selection, {\em i.e.} when all candidate models are contained in an encompassing model containing all the parameters. In this case, if $d$ is the number of parameters considered, the reduction in dimension space from $\sum_{k=1}^d k \times C_d^k = d\times2^{d-1}$ to $d$ can be dramatic. Likewise, in theory the total number of parameters could always be reduced to the maximum number of parameters within one given model, via systematic re-parameterizations. However, it is not clear for a given problem what parameters should be shared, or what common priors should be chosen.

\par

\small


%
%
%
%
%

\appendix

\section{Posterior propriety}\label{app:propriety}

Due to the use of improper priors on $\lambda$ and possibly on $\theta$, it is necessary to verify that the posterior distribution in the mixture model (\ref{eq:mixture_model_posterior}) is proper. In our case, this simply boils down to showing that the posterior distribution from both models is proper.


We start with propriety under model $\mathfrak M_0$. This is obtained by showing that:
$$
m_0(y) = \int_{\theta,\lambda} f_0(y|\theta,\lambda,x)\pi(\theta)\pi(\lambda) d\theta d\lambda 
$$
is finite; note that $\delta$ and $k$ need not be considered here, since they do not intervene in the calculation of $f_0$, and the parameters are pairwise independent. Following \cite{Kamary17}, we first integrate $\theta$ out. In Gelfand and Smith's bracket notation:
\begin{eqnarray*}
[ y | \theta, \lambda, \mathfrak M_0, x ] [ \theta ] [ \lambda]
&=& \frac{\lambda^{-n-1}}{\sqrt{2\pi}^n}  \exp\left\{ -\frac{1}{2\lambda^2}|| y - g(x)\theta ||^2\right\},
\end{eqnarray*}

Next, to integrate $\theta$ out, we develop the sum of squares, simplifying the notation $g(x)$ to $g_x$:
\begin{eqnarray*}
|| y - g_x\theta ||^2 
&=& 
(y - g_x\theta)^\top (y - g_x\theta)\\
&=& y^\top\! y + \theta^\top\! {g_x}^\top\! g_x \theta - 2y^\top\! g_x \theta
\end{eqnarray*}
Identifying the above expression with the development of $(\theta - \widehat\mu_0)^\top \widehat\Sigma_0^{-1}(\theta - \widehat\mu_0)$, we find that the conditional posterior distribution of $\theta$ given $\lambda$, is proper, and given by:
\begin{eqnarray*}
\theta |y, \lambda, \mathfrak M_0, x &\sim& \mathcal N(\widehat\mu_0, \widehat\Sigma_0),
\end{eqnarray*}
where:
$$
\widehat\mu_0 = (g_x^\top\, {g_x})^{-1} {g_x}^\top y;\quad \widehat\Sigma_0 = ({g_x}^\top g_x)^{-1}.
$$
By integrating $\theta$ out, we obtain the product of the partially integrated likelihood by the prior for $\lambda$:
\begin{eqnarray}
[ y | \lambda, \mathfrak M_0, x ] [ \lambda] 
&=& \frac{\lambda^{-(n-p)-1}}{\sqrt{2\pi}^{n-p}} \sqrt{|\widehat\Sigma_0|}\exp\left\{ -\frac{1}{2\lambda^2}y^\top(I_n - g_x(g_x^\top\!g_x)^{-1} g_x^\top)y\right\}.
\label{eq:M0_lambda_marginal}
\end{eqnarray}
A key point is to show that $y^\top(I_n - g_x(g_x^\top\!g_x)^{-1} g_x^\top)y$ is strictly positive with probability one for $y$, in both models, and whatever the parameters' values. To see this, note that $g_x(g_x^\top\!g_x)^{-1} g_x^\top$ is the matrix of the orthogonal projection $\pi_{g_x}$ on $Vect(g_x)$, the subspace of $\mathbb R^n$ spanned by the columns of $g_x$. Hence, $(I_n - g_x(g_x^\top\!g_x)^{-1} g_x^\top)$ is the matrix of the orthogonal projection $\pi_{g_x^\bot}$ on the orthogonal complement of $Vect(g_x)$.
This turns out to be equal to:
\begin{eqnarray*}
y^\top(I_n - g_x(g_x^\top\!g_x)^{-1} g_x^\top)y &=& ||y-g_x \widehat\mu_0||^2.
\end{eqnarray*}
In other words, the sum of squares we wish to control is equal to zero if and only if $y$ is perfectly explained by the linear model, or, equivalently, if the observation errors $\varepsilon$ are null. But the probability that this happens is equal to zero, as long as $\varepsilon$'s variance $\lambda^2$ is strictly positive.

To integrate (\ref{eq:M0_lambda_marginal}) over $\lambda$, use the change in variables $\tau = 1/{\lambda^2}$, yielding:
\begin{eqnarray*}
[ y | \tau, \mathfrak M_0, x ] [ \tau] 
&=& \frac{1}{2}\frac{\tau^{\frac{n-p}{2}-1}}{\sqrt{2\pi}^{n-p}} \sqrt{|\widehat\Sigma_0|} \exp\left\{ -\frac{||y - g_x \widehat\mu_0||^2}{2}\tau\right\}.
\end{eqnarray*}
We recognize easily the expression of the $\mathcal Ga\left(\frac{n-p}{2}, \frac{||y - g_x \widehat\mu_0||^2}{2}\right)$ Gamma distribution, so that $m_0(y)$ is given by:
\begin{eqnarray*}
[ y | \mathfrak M_0, x ]
&=& \frac{1}{2}\frac{\Gamma(\frac{n-p}{2})}{\sqrt{\pi}^{n-p}} \sqrt{|\widehat\Sigma_0|}
	||y - g_x \widehat\mu_0||^{-(n-p)}.
\end{eqnarray*}


Propriety for model $\mathfrak M_1$ is obtained along the same lines as for $\mathfrak M_0$, except that, following still \cite{Kaniav17}, $\delta$ needs to be integrated out first, and $k$ last. Recall that we need to show that:
$$
m_1(y) = \int_{\delta,\theta,\lambda,k} f_1(y|\delta,\theta,\lambda,k,x)\pi(\delta)\pi(\theta)\pi(\lambda)\pi(k) d\delta d\theta d\lambda dk
$$
is finite. We first integrate $\delta$ out; this is immediate since 
$$
y | \theta,\lambda,k, \mathfrak M_1 \sim \mathcal N(g_x \theta; \lambda^2(I_n + k{\rm Corr})),
$$
so that 
\begin{eqnarray}
[ y | \theta, \lambda, k, \mathfrak M_1, x ]
&=& \frac{\lambda^{-n}}{\sqrt{2\pi}^n} |I_n + k{\rm Corr}|^{-1/2} \exp\left\{ -\frac{(y-g_x\theta)^\top (I_n + k{\rm Corr})^{-1}(y-g_x\theta)}{2\lambda^2} \right\}.\label{eq:delta_marginal}
\end{eqnarray}

An alternative would be to derive $\delta$'s conditional posterior. This is given by:
$$
[ y | \delta, \theta, \lambda, k, \mathfrak M_1, x ] [ \delta] \propto \lambda^{-2n}
\exp\left\{
-\frac{1}{2\lambda^2} \left[ (y - g_x \theta - \delta)^\top (y - g_x \theta - \delta) + \delta^\top (k{\rm Corr})^{-1} \delta \right]
\right\}.
$$
Here, we can develop the quadratic form into:
\begin{eqnarray*}
&(y - g_x \theta - \delta)^\top (y - g_x \theta - \delta) + \delta^\top \left(1/k{\rm Corr}^{-1} \right)\delta& \\
&=&\\
&(y\top - g_x \theta)^\top (y\top - g_x \theta) + \delta^\top (I_n + 1/k{\rm Corr}^{-1}) \delta - 2(y - g_x \theta)^\top \delta&
\end{eqnarray*}
which we identify with: $(\delta - m)^\top V^{-1} (\delta - m)$, leading to:
$$
V = (I_n + 1/k{\rm Corr}^{-1})^{-1};\quad m = (I_n + 1/k{\rm Corr}^{-1})^{-1} (y - g_x \theta).
$$
Next, we use the {\em basic marginal equality} \cite{Chib95}:
$$
[ y | \theta, \lambda, k, \mathfrak M_1, x ] = \frac{[ y | \delta, \theta, \lambda, k, \mathfrak M_1, x ][\delta]}{[ \delta | y, \theta, \lambda, k, \mathfrak M_1, x ]},
$$
which leads directly to (\ref{eq:delta_marginal}).

Next, to integrate $\theta$ out, we develop the quadratic form:
\begin{eqnarray*}
&(y-g_x\theta)^\top (I_n + k{\rm Corr})^{-1}(y-g_x\theta)&\\
&=& \\
&y^\top (I_n + k{\rm Corr})^{-1} y + \theta^\top g_x^\top (I_n + k{\rm Corr})^{-1} g_x\theta -2y^\top (I_n + k{\rm Corr})^{-1} g_x\theta&
\end{eqnarray*}
Identifying the above expression with the development of $(\theta - \widehat\mu_1)^\top \widehat\Sigma_1^{-1}(\theta - \widehat\mu_1)$, we find that the conditional posterior distribution of $\theta$ given $\lambda$, is proper, and given by:
\begin{eqnarray*}
\theta |y, \lambda, k, \mathfrak M_1, x &\sim& \mathcal N(\widehat\mu_1, \widehat\Sigma_1),
\end{eqnarray*}
where:
$$
\widehat\mu_1 = (g_x^\top V_k^{-1} {g_x})^{-1} {g_x}^\top V_k^{-1} y;\quad \widehat\Sigma_1 = ({g_x}^\top V_k^{-1} g_x)^{-1},
$$
with $V_k = (I_n + k{\rm Corr}).$ By integrating $\theta$ out, we obtain the product of the partially integrated likelihood by the prior for $\lambda$ and $k$:
\begin{eqnarray}
&[ y | \lambda, k, \mathfrak M_1, x ] [ \lambda] &\nonumber\\
&=& \\
&\frac{\lambda^{-(n-p)-1}}{\sqrt{2\pi}^{n-p}} |V_k|^{-1/2} |\widehat \Sigma_{1,k}|^{1/2}\exp\left\{ -\frac{1}{2\lambda^2}y^\top(V_k^{-1} - V_k^{-1}g_x(g_x^\top V_k^{-1} g_x)^{-1} g_x^\top V_k^{-1})y\right\}.&\nonumber
\label{eq:M1_lambda_marginal}
\end{eqnarray}
Again, we can show that the above quadratic form  is equal to the vector norm:
\begin{eqnarray*}
y^\top(I_n - g_x(g_x^\top\!g_x)^{-1} g_x^\top)y &=& ||V_k^{-1/2}(y-g_x \widehat\mu_1)||^2.
\end{eqnarray*}
This is strictly positive with probability $1$.

Next step is to integrate (\ref{eq:M1_lambda_marginal}) over $\lambda$, using the change in variables $\tau = 1/{\lambda^2}$, yielding:
\begin{eqnarray*}
[ y | \tau,k, \mathfrak M_1, x ] [ \tau] 
&=& \frac{1}{2}\frac{\tau^{\frac{n-p}{2}-1}}{\sqrt{2\pi}^{n-p}}  \sqrt{\frac{|\widehat\Sigma_1|}{|V_k|}}\exp\left\{ -\frac{||V_k^{-1/2}(y - g_x \widehat\mu_1)||^2}{2}\tau\right\}.
\end{eqnarray*}
We recognize the $\mathcal Ga\left(\frac{n-p}{2}, \frac{||V_k^{-1/2}(y - g_x \widehat\mu_1)||^2}{2}\right)$ Gamma distribution, so that :
\begin{eqnarray*}
[ y | k, \mathfrak M_1, x ] 
&=& \frac{1}{2}\frac{\Gamma(\frac{n-p}{2})}{\sqrt{\pi}^{n-p}} \sqrt{\frac{|\widehat\Sigma_1|}{|V_k|}}
	||V_k^{-1/2}(y - g_x \widehat\mu_1)||^{-(n-p)} .
\end{eqnarray*}

Since this last expression is bounded in $k$, and the prior on $k$ is proper, we can conclude that $m_1(y) < \infty\quad \square$

\end{document}